\theoremstyle{plain}
\newtheorem{prop}[theorem]{Proposition}
\theoremstyle{definition}
\newtheorem{prb}{Problem}
\newcommand{\arity}{\mathsf{ar}}
\newcommand{\boldR}{\mathbf{R}}
\newcommand{\boldx}{\mathbf{x}}
\newcommand{\boldy}{\mathbf{y}}
\newcommand{\calH}{\mathcal{H}}
\newcommand{\calG}{\mathcal{G}}
\newcommand{\wpe}{\textsf{W[P]}\xspace}
\newcommand{\homo}{\textsc{Hom}$_<$\xspace}
\newcommand{\parhomo}{\textsc{Hom}$_<(|V(H)|)$\xspace}
\newcommand{\homoH}{\textsc{Hom}$_<$\textsc{(H)}\xspace}
\newcommand{\homos}{\textsc{Hom}$_<^\star$\xspace}
\newcommand{\NP}{\textsf{NP}\xspace}
\newcommand{\wone}{\textsf{W[1]}\xspace}
\newcommand{\XP}{\textsf{XP}\xspace}
\newcommand{\Oh}{\mathcal{O}}
\newcommand{\core}{\textsc{CORE}$_<$\xspace}
\newcommand{\PP}{\textsf{P}}
\newcommand{\homochi}{\textsc{Hom}$_<^{\chi^<(G)}$\xspace}
\newcommand{\problemStatement}[3]{%
  \begin{center}
  \begin{tabularx}{\columnwidth}{@{}lX@{}}
  \toprule
  \multicolumn{2}{@{}c@{}}{\textsc{#1}}\tabularnewline
  \midrule
  \bfseries Input:    & #2 \\
  \bfseries Question: & #3 \\
  \bottomrule
  \end{tabularx}
  \end{center}
}
\begin{document}
\title{Complexity {Aspects} of {Homomorphisms} of {Ordered} {Graphs}}
%
%
\author{Michal \v{C}ert\'{\i}k \inst{1}\orcidID{0009-0008-6880-4896} \and
Andreas Emil Feldmann \inst{2}\orcidID{0000-0001-6229-5332} \and
Jaroslav Ne\v{s}et\v{r}il \inst{1}\orcidID{0000-0002-5133-5586} \and
Pawe\l{} Rz\k{a}\.zewski \inst{3}\orcidID{0000-0001-7696-3848}\thanks{Supported by the National Science Centre grant 2024/54/E/ST6/00094.}}
\authorrunning{M. \v{C}ert\'{\i}k et al.}
%
\institute{Computer Science Institute, Faculty of Mathematics and Physics \\
Charles University\\
Prague, Czech Republic \and
Department of Computer Science \\
University of Sheffield\\
Sheffield, United Kingdom \and
Warsaw University of Technology\\
and University of Warsaw\\
Warsaw, Poland}
\maketitle              
\begin{abstract}

     We examine ordered graphs, defined as graphs with linearly ordered vertices, from the perspective of homomorphisms (and colorings) and their complexities. We demonstrate the corresponding computational and parameterized complexities, along with algorithms associated with related problems. These questions are interesting and we show that numerous problems lead to various complexities. The reduction from homomorphisms of unordered structures to homomorphisms of ordered graphs is proved, achieved with the use of ordered bipartite graphs. We then determine the \NP-completeness of the problem of finding ordered homomorphisms of ordered graphs and the \XP and \wone-hard nature of this problem parameterized by the number of vertices of the image ordered graph. Classes of ordered graphs for which this problem can be solved in polynomial time are also presented.

\keywords{Computational Complexity \and Parameterized Complexity \and Algorithms \and Ordered Graphs  \and Homomorphisms}
\end{abstract}

\section{Introduction}

An \emph{ordered graph} is a graph whose vertex set is totally ordered (see example in Figure~\ref{fig:OrdHomsInterval}).
For two ordered graphs $G$ and $H$, an \emph{ordered homomorphism} from $G$ to $H$ is a mapping $f$ from $V(G)$ to $V(H)$ that preserves edges and the orderings of vertices, i.e.,
\begin{enumerate}
    \item for every $uv \in E(G)$ we have $f(u)f(v) \in E(H)$,
    \item for $u,v \in V(G)$, if $u \leq v$, then $f(u) \leq f(v)$.
\end{enumerate}

Note that the second condition implies that the preimage of every vertex of $H$ forms a segment or an \emph{independent interval} in the ordering of $V(G)$ (see Figure~\ref{fig:OrdHomsInterval}).

We denote the existence of an ordered homomorphism from the ordered graph $G$ to the ordered graph $H$ by $G\to H$.

\begin{figure}[ht]
\begin{center}
\includegraphics[scale=0.8]{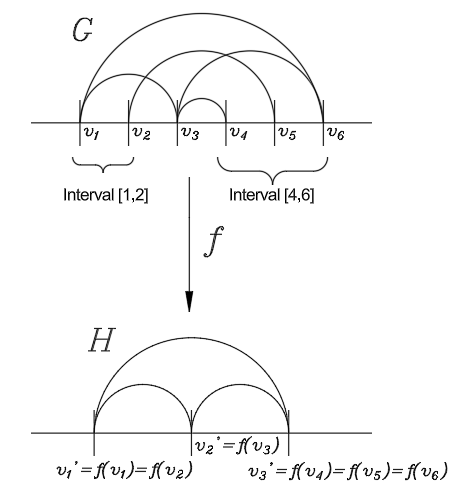}
\end{center}
\caption{Ordered Homomorphism $f$ and Independent Intervals.}
\label{fig:OrdHomsInterval}
\end{figure}


Further on we may call an independent interval simply an interval, and an ordered graph, and ordered homomorphism simply graph, and homomorphism, respectively.

\section{Motivation and Overview}

Ordered graphs frequently emerge in various contexts: extremal theory (~\cite{Pach2006},~\cite{conlon2016ordered}), category theory (~\cite{Hedrlín1967},~\cite{nesetril2016characterization}), Ramsey theory (~\cite{Nesetril1996},~\cite{Hedrlín1967},~\cite{balko2022offdiagonal}), among others. Recently, it has been shown that the concept of twin width in graphs corresponds to NIP ("not the independence property") classes of ordered graphs (~\cite{bonnet2021twinwidth}, see also ~\cite{bonnet2024twinwidth}), thereby linking graph theory with model theory.

The richness of the field of ordered graphs is reflected not only in its theoretical depth and challenges but also in its numerous applications across science and technology. Related research spans a wide range of domains, including physics \cite{verbytskyi2020hepmc3-80a}, medicine and biology \cite{goerttler2024machine-7eb}, large language models \cite{ge2024can-da2}, neural networks \cite{guo2019seq2dfunc-c64}, machine learning \cite{goerttler2024machine-7eb}, self-supervised learning \cite{LimOrderLearning2020}, data analysis and subspace clustering \cite{xing2025block-diagonal-073}, systems and networks \cite{li2015understanding-1cc}, software optimization \cite{romansky2020approach-60b}, malware detection \cite{thomas2023intelligent-c25}, business process management \cite{kourani2023business-759}, workflow models \cite{kourani2023scalable-a0f}, decision making \cite{wang2022improved-f6a}, dynamic system call sandboxing \cite{zhang2023building-6de}, fault tolerance \cite{chen2023pgs-bft-dcb}, blockchains \cite{malkhi2023bbca-chain-bfa}, curriculum development \cite{kuzmina2020curriculum-cdf}, multi-linear forms \cite{bhowmik2023multi-linear-2f5}, ordered graph grammars \cite{brandenburg2005graph-grammars-2fe}, rigidity theory \cite{connelly2024reconstruction-912}, shuffle squares \cite{grytczuk2025shuffle-a46}, and tilings \cite{balogh2022tilings-d8d}, among many others.

In relation to the aforementioned research, homomorphisms of ordered graphs provide both validation and extension: although they impose stricter conditions in comparison to standard homomorphisms (see, for example, ~\cite{HellNesetrilGraphHomomorphisms}), they also exhibit their own unique complexity (see, for instance~\cite{Axenovich2016ChromaticNO,Guerra2012,braun2013cellular-998,nie2023asymptotic-fa9,bose2004ordered-e8c,nescer2023duality}.

The exploration of complexities and parameterized complexities concerning ordered graphs and their homomorphisms has also been examined from multiple perspectives. Recently, ~\cite{kun2025dichotomy-dd6} has shown that ordering problems for graphs defined by finitely many forbidden ordered subgraphs still capture the class \NP. In ~\cite{duffus1995computational-7a8}, the complexities of decision problems involving ordered graphs and their subgraphs are studied. ~\cite{certik_core_2025} addresses (parameterized) complexities related to the core of ordered graphs and hypergraphs problems, where a core is defined as an ordered graph that is not homomorphic to a proper ordered subgraph. However, the complexity of determining whether an ordered graph is a core remains an unresolved question.

Ordered matchings, defined as ordered graphs where each vertex has exactly one incident edge, are known to play a crucial role in the study of ordered graphs and their homomorphisms (~\cite{balko2022offdiagonal}, ~\cite{Balko_2020}, ~\cite{conlon2016ordered}, ~\cite{nescer2023duality}). The complexities and parameterized complexities of the associated problems were determined in ~\cite{certik_matching_2025}.

In this article, we therefore try to extend the research by results on the complexity and parameterized complexity of fundamental problems related to homomorphisms of ordered graphs.

In Section ~\ref{Sec:Structures}, we address the construction of a reduction from unordered structures to ordered homomorphisms. We show that this construction is, in fact, feasible by using ordered bipartite graphs. 

In the beginning of Section ~\ref{Sec:complexity}, we show that for fixed ordered graph $H$, the $H$-coloring problem for ordered graph, analogous to the \NP-complete problem for unordered graphs (see ~\cite{HellNesetrilHColoring1990}), can be solved in polynomial time. We then define the problem \homo, as the problem of finding ordered homomorphism between given ordered graphs $G, H$ (see the problem definition ~\ref{Prb:ColouringOfG}), and we use the result from the previous section to prove the \NP-completeness of \homo.

Although, on the one hand, the existence of optimal coloring is polynomial (see also ~\cite{nescer2023duality}), we show the \NP-completeness of the general question in Section ~\ref{Sec:complexity}. We also show that the general question parameterized by $H$ is in \XP, which shows that with fixed $H$, finding the ordered homomorphism is in $P$.

Section ~\ref{Sec:ParamComplxt} then follows with the main result on the parameterized complexity of \homo$(|V(H)|)$ (\homo parameterized by $|V(H)|$) being \wone-hard.

In Section ~\ref{sec:polynomialcomp}, we then present two examples of ordered graphs classes, for which \homo problem can be solved in polynomial time.

The paper ends with several open problems and outlines of related research.

\section{Unordered Structures to Ordered Graphs Reduction}
\label{Sec:Structures}

    The reduction of unordered homomorphisms to ordered ones is of significant importance and represents one of the major results in our study. We try to undertake this reduction with full generality, ensuring its applicability and utility in the unfolding narrative of our work.

	A signature $\sigma$ consists of a finite set of relation symbols with specified arities. The arity of the relation $R$ is denoted by $\arity(R)$.
	A \emph{structure} $\calH$ with signature $\sigma(\calH)$ consists of a universe $V(\calH)$ together with a set of relations $\boldR(\calH)=\{R(\calH) \mid R\in \sigma(\calH)\}$ over the universe $V(\calH)$. We write $\calH=(V(\calH), \boldR(\calH))$.
	By $||\calH||$ we denote $|\sigma(\calH)| + |V(\calH)| + \sum_{R\in \sigma(\calH)} |R(\calH)|\cdot\arity(R)$, which is the size of a ``reasonable'' encoding of $\calH$.

    Given two structures $\calG$ and $\calH$ with the same signature $\sigma$, a function $f :  V(\calG) \to V(\calH)$ \emph{respects} $R\in \sigma$ if, for each $\boldx\in R(\calG)$, $f(\boldx)\in R(\calH)$, where $f$ is evaluated elementwise.  A \emph{homomorphism} from $\calG$ to $\calH$ is a function $h: V(\calG) \to V(\calH)$ that respects every $R\in \sigma$.

    If there exists a homomorphism from structure $\calG$ to structure $\calH$, we denote it by $\calG\to \calH$.

    \begin{theorem}\label{thm:structures}
        Given two unordered structures $\calG,\calH$ of the same signature $\sigma$, in time polynomial in $||\calG||+ ||\calH||$, we can construct a pair of ordered bipartite graphs $G,H$, such that $\calG \to \calH$ if and only if $G \to H$.
    \end{theorem}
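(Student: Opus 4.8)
The plan is to realize each structure as an ordered bipartite graph via an ordered incidence construction, and to let the linear order carry exactly the information (coordinate position, and which color class a vertex belongs to) that a plain incidence graph would lose. First I fix arbitrary orders $u_1 < \dots < u_p$ on $V(\calG)$ and $w_1 < \dots < w_n$ on $V(\calH)$. On the $H$ side I create, for every source variable $u_a$ and every target element $w \in V(\calH)$, a \emph{value vertex} recording the guess ``$u_a \mapsto w$''; on the $G$ side I create a single \emph{selector} vertex for each $u_a$. For every relation $R_j$ and every pair consisting of a source tuple $\boldy \in R_j(\calG)$ and a target tuple $\boldx \in R_j(\calH)$, I add in $H$ a small tuple gadget with one vertex per coordinate $i$, the $i$-th vertex being adjacent to the value vertex of (the $i$-th variable of $\boldy$, the entry $x_i$); in $G$ I add one tuple gadget per source tuple $\boldy$, its $i$-th vertex adjacent to the selector of the $i$-th variable of $\boldy$. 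Both graphs are bipartite with value vertices on one side and tuple-gadget vertices on the other; crucially, in the global order I place \emph{all} value vertices before \emph{all} tuple-gadget vertices, and use secondary keys $(a,\mathrm{rank}(w))$ and $(j,\boldy,\boldx,i)$ to encode variable index, relation, source tuple, and coordinate position. The whole construction has size polynomial in $\|\calG\|+\|\calH\|$ (note that $H$ is allowed to depend on $\calG$, which we use freely).

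For the forward direction, given a homomorphism $g:\calG\to\calH$ I send each selector of $u_a$ to the value vertex of $(u_a, g(u_a))$, and each source tuple gadget for $\boldy\in R_j(\calG)$ to the target tuple gadget indexed by the pair $(\boldy, g(\boldy))$, which exists precisely because $g(\boldy)\in R_j(\calH)$. Edge preservation is immediate from the incidence pattern, and monotonicity holds because value images stay inside per-variable blocks arranged in the same order as the selectors, while tuple-gadget images inherit the order of their $(j,\boldy,i)$ keys; here the per-variable menu is exactly what lets the (generally non-monotone) map $g$ be simulated by a monotone $f$.

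For the backward direction I first show that any ordered homomorphism $f:G\to H$ preserves the two color classes, forced purely by the order together with bipartite adjacency: since every tuple-gadget vertex lies above its unique value neighbour in $G$ while in $H$ all value vertices lie below all tuple vertices, a short monotonicity argument rules out a value vertex mapping into the tuple region (its later neighbours would be forced both above it and, being sent to value vertices, below it) and symmetrically rules out the reverse. Once the classes are respected, each selector of $u_a$ maps to a value vertex and thereby reads off a candidate value $g(u_a)$; because there is a single selector per variable, the resulting assignment is automatically consistent across all tuples containing $u_a$.

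It remains to verify that this assignment is a homomorphism, and this is the step I expect to be the main obstacle. Each vertex of a source tuple gadget for $\boldy\in R_j(\calG)$ is adjacent to exactly one value vertex, so its image is adjacent to exactly one value vertex, which pins down a relation, a target tuple $\boldx$, and the coordinate matched. The difficulty is to force this matching to be the \emph{identity} on coordinates and on the source tuple --- that selector $u_a$ lands in ``block $a$'', and that coordinate $i$ maps to coordinate $i$ of the same relation and same source tuple --- so that the realized target tuple is exactly $g(\boldy)$ and hence $g(\boldy)\in R_j(\calH)$. I plan to enforce this rigidity through the secondary order keys, supplemented if necessary by \emph{anchor} vertices (one per block, adjacent to an entire value block) that use monotonicity to prevent selectors from collapsing or skipping blocks and to prevent tuple gadgets from folding or permuting their coordinates. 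Verifying that these order-based anchors leave no room for a ``cheating'' homomorphism --- while keeping the graphs bipartite and of polynomial size --- is the technical heart of the argument.
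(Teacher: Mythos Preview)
Your overall architecture matches the paper's: encode each source variable $u_a$ by a selector in $G$ and a ``menu'' block $\{(u_a,w):w\in V(\calH)\}$ in $H$, encode source tuples by gadgets living after all value vertices, and enforce the block structure by an auxiliary order-rigid device. The paper does exactly this, first with explicit lists and then replacing lists by a single long forward path $P$ woven between the blocks (a forward path maps injectively, and the only forward path of that length in $H'$ is its primed copy, so $P\mapsto P'$ and the interleaving pins every selector and every tuple vertex to its intended block). Your ``anchor vertices, one per block'' are a sketchier stand-in for this path trick; they would need their own rigidity argument, which you have not given.

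The more serious issue is your choice to split each tuple gadget into one vertex per coordinate. Even granting you perfect anchoring (selector $s_a$ lands in block $a$; coordinate $i$ of the source-$\boldy$ gadget lands in coordinate $i$ of some $(R_j,\boldy,\boldx)$ gadget), nothing forces the different coordinates to pick the \emph{same} $\boldx$. Concretely, take a ternary $R$ with $R(\calH)=\{(a,e,e),(e,b,e),(e,e,c)\}$ (ordered in this way) and unary relations $S,T,U$ with $S(\calH)=\{a\}$, $T(\calH)=\{b\}$, $U(\calH)=\{c\}$; on the source side let $R(\calG)=\{(u_1,u_2,u_3)\}$, $S(\calG)=\{u_1\}$, $T(\calG)=\{u_2\}$, $U(\calG)=\{u_3\}$. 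Then $\calG\not\to\calH$ since $(a,b,c)\notin R(\calH)$, yet in your $G,H$ one can monotonically send $s_i\mapsto(u_i,\cdot)$ as forced and map the three coordinate vertices of the $R$-gadget to positions $1,2,3$ of the \emph{three different} target-tuple gadgets respectively. Monotonicity holds because the $\boldx$'s are increasing, all edges are preserved, and your anchors (which separate only blocks, not target tuples inside a block) do not obstruct this. The paper avoids this by collapsing the tuple gadget to a \emph{single} vertex $(R,\boldx,\boldy)$ adjacent to all the relevant value vertices; the lone source vertex $(R,\boldx)$ then must hit one $(R,\boldx,\boldy)$, which simultaneously certifies $\boldy=f(\boldx)\in R(\calH)$. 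Switching to this single-vertex gadget fixes the gap and also removes the need to anchor coordinates within a tuple.
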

    \begin{proof}    
        Without loss of generality, we can assume that every $v \in V(\calG)$ appears in some $\boldx \in \bigcup_{R \in \sigma} R(\calG)$, as otherwise we can safely remove $v$ from $\calG$ obtaining an equivalent instance of our problem.
        Fix some arbitrary ordering on the elements of $\sigma$.
        Furthermore, we fix an arbitrary ordering on the elements of $V(\calG)$, and an arbitrary ordering on the elements of $V(\calH)$.
        Finally, for all $R \in \sigma$, we fix an arbitrary ordering of the elements of $R(\calG)$ and of $R(\calH)$.

        \medskip
        We will perform the construction in two steps: first, we will build a pair of ordered bipartite graphs $G,H$, such that every vertex $v$ of $G$ is equipped with a list $L(v) \subseteq V(H)$, and $G$ has a homomorphism $h$ to $H$ respecting lists $L$ if and only if $\calG \to \calH$. By ordered homomorphism $h:G\to H$ \emph{respecting the lists $L$}, we will mean that for each $v\in V(G)$, it holds that $h(v)\in L(v) \subseteq V(H)$.

        \paragraph*{Definition of $H$.}
        
        The set $V(H)$ is partitioned into sets $A_H$ and $B_H$ which form the bipartition of $H$.
        The set $A_H$ contains a vertex $(v,u)$ for every $v \in V(\calG)$ and $u \in V(\calH)$.
        The set $B_H$ contains a vertex $(R,\boldx,\boldy)$ for every $R \in \sigma$, $\boldx \in R(\calG)$ and $\boldy \in R(\calH)$.
        Vertices $(v,u)$ and $(R,\boldx,\boldy)$ are adjacent in $H$ if and only if there exists $i \in [\arity(R)]$ such that $v = \boldx_i$ and $u = \boldy_i$.
        
        The ordering of $V(H)$ is defined as follows. All vertices of $A_H$ precede all vertices of $B_H$. The vertices within $A_H$ are ordered lexicographically according to the orderings of $V(\calG)$ and $V(\calH)$, i.e., $(v,u) < (v',u')$ if and only if $v<v'$ or $v=v'$ and $u<u'$.
        Similarly, the vertices in $B_H$ are ordered lexicographically according to the orderings of the sets $\sigma$, $R(\calG)$, and $R(\calH)$.
        
        \paragraph*{Definition of $G$.}

        The graph $G$ is defined as the \emph{incidence graph} of $\calG$.
        More specifically, its vertex set consists of two independent sets $A_G$ and $B_G$,
        where $A_G$ is $V(\calG)$ and $B_G$ is $\bigcup_{R \in \sigma} \bigcup_{\boldx \in R(\calG)} \{(R,\boldx)\}$.
                
        In the ordering of vertices, first we have all vertices from $A_G$ ordered according to their ordering in $\calG$, and then all the vertices from $B_G$, ordered lexicographically according to the orderings of $\sigma$ and $R(\calG)$, respectively.

        Finally, we set the list of every $v \in A_G$ to $\bigcup_{u \in V(\calH)} \{(v,u)\}$,
        and the list of every $(R,\boldx) \in B_G$ to $\bigcup_{\boldy \in R(\calH)} \{  (R,\boldx,\boldy)\}$. This completes the definition of $G$ and $H$.
        
        Note that $|V(G)|=|V(\calG)| + \sum_{R \in \sigma} |R(\calG)|$,
        and $|V(H)|=|V(\calG)| \cdot |V(\calH)| + \sum_{R \in \sigma} |R(\calG)| \cdot |R(\calH)|$, and these graphs can be constructed in time polynomial in $||\calG||+ ||\calH||$.

    \paragraph*{Equivalence of instances.}
        First, suppose that there is a homomorphism $f : \calG \to \calH$.
        We define a mapping $h: V(G) \to V(H)$ as follows.
        For each $v \in A_G$, we set $h(v)=(v,f(v))$.
        For each $(R,\boldx) \in B_G$, we set $h((R,\boldx))=(R,\boldx,f(\boldx))$, where $f(\boldx)$ is evaluated element-wise.

        It is clear that $h$ respects the lists $L$ and the orderings of $V(G)$ and $V(H)$. Respecting the lists can be seen from the definition of $h$, and $h$ respecting the ordering follows from the lexicographic ordering in the definition of $G$ and $H$.
        Furthermore, $h$ is a homomorphism (it respects (binary) relations), since $f$ is.

        On the other hand, suppose that $h$ is an order-preserving homomorphism from $G$ to $H$ that respects the lists $L$.
        We define the function $f : V(\calG) \to V(\calH)$ by mapping each $v \in V(\calG)$ to $u \in V(\calH)$ such that $h(v)=(v,u)$.
        We claim that $f$ is a homomorphism from $\calG$ to $\calH$.
        Taking some $R \in \sigma$ and $\boldx \in R(\calG)$, we aim to show that $f(\boldx) \in R(\calH)$.
        Note that $h((R,\boldx))$ must be a vertex $(R,\boldx,\boldy)$ of $H$ which is a common neighbor of $\bigcup_{v \in \boldx} h(v)$.
        By construction of $H$, such a vertex exists if and only if $\boldy \in R(\calH)$. Consequently, we have $f(\boldx) = \boldy \in R(\calH)$.

        This shows that the constructed instances are indeed equivalent.
        
    \paragraph*{Non-list variant.}
       We modify $G$ as follows. Let $p=|V(\calG)|$ and $q=|\bigcup_{R \in \sigma} R(\calG)|$. We modify $G$ into $G'$ by adding a separate component, which is a path $P$ on $(p+1)+1+(q+1)$ vertices denoted consecutively by $x_0,\ldots,x_p,y,z_0,\ldots,z_q$.
       In the ordering of $V(G')$ the vertex $x_0$ is the first one, then we insert $x_i$ immediately after the $i$-th vertex from $A_G$. The vertex $x_p$ is succeeded by $y$ and $z_0$, and then we insert $z_i$ after the $i$-th vertex from $B_G$.

       In an analogous way, we modify $H$ into $H'$: we introduce a new component, which is a path $P'$ with consecutive vertices $x'_0,\ldots,x'_p,y',z'_0,\ldots,z'_q$.
       The vertex $x'_0$ is the first vertex of $H'$.
       The vertex $x'_i$ for $i \in [p]$ is inserted immediately after the last vertex from $L(v)$, where $v$ is the $i$-th vertex in $A_G$. 
       The vertex $x'_p$ is succeeded by $y'$ and then by $z'_0$.
       Then we insert each $z'_i$ for $i \in [q]$ after the last vertex of $L((R,\boldx))$, where $(R,\boldx)$ is the $i$-th vertex from $B_G$.

       We argue that $G' \to H'$ if and only if $G$ admits a homomorphism to $H$ that preserves lists $L$.
        If such a homomorphism from $G$ to $H$ exists, we can easily extend it to the homomorphism from $G'$ to $H'$ by mapping every $x_i$ to $x'_i$, $y$ to $y'$ and every $z_i$ to $z'_i$.

        Now suppose that $h$ is a homomorphism from $G'$ to $H'$. We claim that $h$ restricted to $V(G)$ is a homomorphism from $G$ to $H$ that preserves the lists $L$.

        Let us now define a \emph{forward path} as a path whose vertices are ordered in a natural way. Let $G_1,G_2$ be ordered graphs, and let $g$ be an order-preserving homomorphism from $G_1$ to $G_2$. Let $P_1$ be a forward path in $G_1$. Then we observe that $g$ is injective on $P_1$ and the image of $P_1$ contains a spanning forward path.

        Observe that the longest forward path in $H$ has two vertices and $P$ has at least five vertices. Consequently, by the previous observation, $h$ maps each vertex from $P$ to its primed counterpart.
        Notice that if we now ensure that each vertex of $G$ is mapped to a vertex of $H$, we know that $h$ must respect lists $L$.

        For contradiction, suppose that some vertex of $G$ was mapped to $P'$.
        Notice that since every $v \in V(\calG)$ appears in some relation $R(\calG)$, 
        $G$ has no isolated vertices.
        As the whole connected component of $G'$ must be mapped to the same connected component of $H'$, we conclude that there are some vertices $v \in A_G$ and $(R,\boldx) \in B_G$ that are adjacent in $G$ that are mapped to $P'$.
        Say $v$ is the $i$-th vertex of $A_G$ and $(R,\boldx)$ is the $j$-th vertex of $B_G$.
        Since $P$ is mapped to $P'$, we conclude that $v$ is mapped either to $x'_{i-1}$ or to $x'_i$, and $(R,\boldx)$ is mapped either to $z'_{j-1}$ or to $z'_j$.
        However, all these vertices are pairwise nonadjacent in $P'$, contradicting the fact that $h$ is a homomorphism.
        This completes the proof.
    \end{proof}

You can find the application and an example of this reduction in Corollary ~\ref{cor:HomNPC}.

\section{Complexities of Finding Homomorphisms of Ordered Graphs}
\label{Sec:complexity}

In this section, we shall focus on what we consider as natural problems arising when studying complexities of ordered homomorphisms.

\subsection{Polynomial-time $H$-coloring}

Analogously to the $H$-coloring problem of unordered graphs, let us start with the following problem \homoH, assuming a fixed ordered graph $H$.

\begin{prb}
\label{Prb:HColouringOfG}
\end{prb}
\problemStatement{\homoH}
  {Ordered graph $G$.}
  {Does there exist an ordered homomorphism from $G$ to the fixed ordered graph $H$?}

We consistently denote $n = |V(G)|$ and $h=|V(H)|$. The following complexity then follows easily.

\begin{prop}
\label{prop:XP}
    The \homoH problem can be solved in time $\Oh(n^{h-1})$.
    Consequently, for every fixed ordered graph $H$, \homoH is in \PP.
\end{prop}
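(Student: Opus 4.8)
The plan is to exploit the structural observation made right after the definition of an ordered homomorphism: because $f$ must be order-preserving, the preimage of each vertex of $H$ is an \emph{independent interval}, and these intervals occur in $V(G)$ in the same order as the corresponding vertices of $H$. Concretely, write $V(G)=\{v_1<\dots<v_n\}$ and $V(H)=\{u_1<\dots<u_h\}$; then an ordered homomorphism is precisely a non-decreasing map $f:\{v_1,\dots,v_n\}\to\{u_1,\dots,u_h\}$ (so that the ordering condition holds automatically) whose induced coloring preserves edges. Thus $f$ is completely determined by an ordered partition of $v_1<\dots<v_n$ into $h$ consecutive (possibly empty) blocks $I_1,\dots,I_h$, where everything in $I_k$ is sent to $u_k$.

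First I would count and enumerate these candidate maps. Such a partition is specified by the $h-1$ boundary positions separating consecutive blocks, chosen with repetition (to allow empty blocks) among the $n$ vertices; equivalently, the number of non-decreasing maps from an $n$-element set to an $h$-element set is $\binom{n+h-1}{h-1}$, which for fixed $h$ is $\Oh(n^{h-1})$. The algorithm simply iterates over all of these candidate block partitions. For each candidate I would verify that it is an actual homomorphism, i.e., that for every edge $v_av_c\in E(G)$ the images $u_{k(a)}u_{k(c)}$ form an edge of $H$ (where $k(\cdot)$ denotes the block index), answering YES as soon as some candidate passes and NO otherwise. Correctness is immediate from the previous paragraph: every ordered homomorphism arises as one of these partitions, and every partition surviving the edge test is an ordered homomorphism.

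Since there are $\Oh(n^{h-1})$ candidates and each can be tested in polynomial time, the procedure already runs in polynomial time for every fixed $H$, giving membership in \PP. To match the stated bound $\Oh(n^{h-1})$ exactly, I would additionally enumerate the partitions in an order in which successive candidates differ by sliding a single block boundary by one vertex, maintaining a running count of violated edges and updating it only along the edges incident to the moved vertex, so that the edge test contributes only a constant amortized overhead per candidate.

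The main obstacle is making the verification cheap enough to preserve the clean exponent. The structural reduction to non-decreasing maps and the count $\binom{n+h-1}{h-1}=\Oh(n^{h-1})$ are routine once the interval observation is invoked, but a naive re-scan of all of $E(G)$ per candidate would cost an extra polynomial factor; the delicate part is the incremental bookkeeping (or an equivalent amortized argument) showing that the dominant term really is the number $\Oh(n^{h-1})$ of monotone maps. For the qualitative conclusion that \homoH is in \PP this refinement is unnecessary, since any polynomial per-candidate test already suffices.
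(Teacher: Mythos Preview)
Your proposal is correct and follows essentially the same approach as the paper: both reduce to enumerating the $\binom{n+h-1}{h-1}=\Oh(n^{h-1})$ monotone maps (equivalently, weak compositions $x_1+\dots+x_h=n$) and check each candidate. If anything, you are more careful than the paper, which simply states the count without discussing the verification cost; your incremental boundary-sliding argument is a nice addition, though not needed for the $\PP$ conclusion.
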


\begin{proof}
    The statement follows since we can simply count all the $\binom{n+h-1}{h-1}$ possible mappings that preserve the ordering (this can be seen as the number of possible solutions to $x_1+\ldots+x_h=n$ with $x_i\in\{0,\ldots,n\}$), which for fixed $h$ is $\Oh(n^{h-1})$.
\end{proof}

This is, of course, in sharp contrast to the $H$-coloring problem of unordered graphs, which is \NP-complete for any non-bipartite graph $H$ (see ~\cite{HellNesetrilHColoring1986}).

Similarly to unordered graphs, we can also consider the problem of minimum coloring of $G$ and define \emph{the (ordered) chromatic number} $\chi^<(G)$ to be the minimum $k$ such that $V(G)$ can be partitioned into $k$ disjoint independent intervals. Notice that for ordered graphs this is the size of the smallest homomorphic image and, alternatively, the minimum $k$ such that $G\to K_k$, $K_k$ being a complete graph with fixed linear ordering. We shall also call $\chi^<(G)$ a \emph{coloring} of $G$.


We have shown the determination of $\chi^<(G)$ using a simple greedy algorithm in ~\cite{nescer2023duality}, and the topic of $\chi^<$-boundedness of ordered graphs is covered in ~\cite{Axenovich2016ChromaticNO}, ~\cite{nescer2023duality}. Again, this is in contrast with the similar notion for unordered graphs, which is \NP-complete for $K_i, i>2$. 

\subsection{\NP-Completeness of General Problem}

Let us now generalize the \homoH problem and start with a definition of the computational problem we refer to as \homo, whose input is a pair of ordered graphs $G$ and $H$ and we ask if $G$ admits an ordered homomorphism to $H$.

\begin{prb}
\label{Prb:ColouringOfG}
\end{prb}
\problemStatement{\homo}
  {Ordered graphs $G$ and $H$.}
  {Does there exist ordered homomorphism from $G$ to $H$?}

The following result then follows from Theorem ~\ref{thm:structures}.

    \begin{corollary}
    \label{cor:HomNPC}
    The problem \homo is \NP-complete. Furthermore, there is no algorithm that solves every instance $G,H$ of \homo in time subexponential in $|V(G)| + |V(H)|$, unless the ETH fails.
\end{corollary}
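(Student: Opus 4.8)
The plan is to obtain both the NP-hardness and the ETH lower bound by invoking Theorem~\ref{thm:structures} with a well-chosen source problem, rather than building a reduction from scratch. Membership in \NP is immediate: given a candidate mapping $f\colon V(G)\to V(H)$, we can check in polynomial time that it preserves edges and respects the ordering, so \homo $\in$ \NP. The substance is therefore the hardness direction, and here the natural idea is to start from the classical (unordered) $3$-coloring problem, i.e.\ the question of whether an unordered graph $\calG$ admits a homomorphism to $K_3$. This is a homomorphism problem between two structures of the same (single, binary, symmetric) signature, and it is well known to be \NP-complete.

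First I would set up the structures: let $\calG$ be the input graph viewed as a relational structure with one binary edge relation, and let $\calH=K_3$ be the triangle, again as a structure over the same signature. Then $\calG\to\calH$ holds if and only if $\calG$ is $3$-colorable. Applying Theorem~\ref{thm:structures} to this pair yields, in time polynomial in $\|\calG\|+\|\calH\|$, a pair of ordered bipartite graphs $G,H$ with $\calG\to\calH \iff G\to H$. Since the construction is polynomial and the equivalence is exact, this is a polynomial-time many-one reduction from $3$-coloring to \homo, establishing \NP-hardness; together with membership this gives \NP-completeness. This is exactly the ``application and example'' promised in the sentence pointing to Corollary~\ref{cor:HomNPC}, so I would present it in that spirit, possibly working the triangle example through explicitly.

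For the ETH lower bound, the key is to track sizes carefully. Under ETH, $3$-coloring on an $n$-vertex, $m$-edge graph cannot be solved in time $2^{o(n+m)}$ (this follows from the sparsification lemma, which lets us assume $m=\Oh(n)$, so that subexponential in $n+m$ is the same as subexponential in $n$). I would then read off from the size bounds stated at the end of Theorem~\ref{thm:structures}'s proof that $|V(G)|$ and $|V(H)|$ are polynomially bounded in $\|\calG\|+\|\calH\|$; more precisely, with $\calH=K_3$ fixed and $\calG$ sparse, both $|V(G)|$ and $|V(H)|$ are $\Oh(n+m)=\Oh(n)$, i.e.\ linear in the size of the $3$-coloring instance. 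Hence an algorithm for \homo running in time subexponential in $|V(G)|+|V(H)|$ would, through this reduction, solve $3$-coloring in time subexponential in $n+m$, contradicting ETH.

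The main obstacle is not hardness per se but the \emph{linearity} of the blow-up needed for the ETH statement: the bound $|V(H)|=|V(\calG)|\cdot|V(\calH)| + \sum_{R}|R(\calG)|\cdot|R(\calH)|$ is linear in the instance size only because the target $\calH$ (and its relations) has constant size and because the sparsification lemma lets us assume the source is sparse. I would therefore be careful to fix $\calH=K_3$ \emph{before} applying the reduction and to invoke sparsification to control $m$, so that $|V(G)|+|V(H)|=\Oh(n+m)=\Oh(n)$ genuinely holds; a super-linear dependence here would weaken the conclusion from ``no $2^{o(|V(G)|+|V(H)|)}$ algorithm'' to something strictly weaker. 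Once the linear size transfer is secured, the ETH conclusion is a routine contrapositive.
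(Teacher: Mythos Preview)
Your proposal is correct and follows essentially the same route as the paper: reduce from $3$-coloring via Theorem~\ref{thm:structures}, then track the linear blow-up in $|V(G)|$ and $|V(H)|$ to transfer the ETH bound. The only difference is how sparsity is secured: the paper starts directly from $3$-\textsc{Coloring} on graphs of maximum degree~$4$ (so $|E(\calG)|=\Oh(|V(\calG)|)$ is automatic and the ETH lower bound is quoted from the literature), whereas you invoke the sparsification-lemma chain to assume $m=\Oh(n)$; both yield the needed $|V(G)|+|V(H)|=\Oh(|V(\calG)|)$.
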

        \begin{proof}
            Let $\calG$ be an instance of 3-\textsc{Coloring} of maximum degree 4. We know from ~\cite{doi:10.1137/1.9781611974331.ch112}, that this problem is \NP-hard and has ETH lower bound.
            
            We can see this problem as the homomorphism problem of structures with signature $\{E\}$, where the target structure $\calH$ is of constant size.
            We invoke \cref{thm:structures} to obtain in polynomial time an equivalent instance $(G,H)$ of \homo, which proves that \homo is \NP-hard.

            Note that the number of vertices of $G$ is $|V(\calG)|+|E(\calG)|=\Oh(|V(\calG)|)$,
            and the number of vertices of $H$ is $3|V(\calG)| + 6|E(\calG)| = \Oh(|V(\calG)|)$.
            Consequently, the lower bound of the ETH holds.            
            \end{proof}

\section{Parameterized Complexity}
\label{Sec:ParamComplxt}

We shall continue by investigating a parameterized complexity of \homo. 

Similarly to unordered graphs, one of the interesting parameters for exploring the parameterized complexity of \homo is $h=V(H)$. We therefore denote \homo parameterized by $h$ by \parhomo.

To justify exploring the parameterized complexity of \parhomo, we provide its following parameterized complexity upper bound, following from the proof of Proposition ~\ref{prop:XP}.

\begin{prop}
\label{cor:XP}
    The \parhomo problem is in \wpe.
\end{prop}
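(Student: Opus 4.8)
The plan is to exhibit a nondeterministic algorithm witnessing membership in \wpe, using the standard bounded-nondeterminism characterization of this class: a parameterized problem lies in \wpe if and only if it is decidable by a nondeterministic machine running in FPT time $f(h)\cdot\mathrm{poly}(n)$ while using at most $g(h)\cdot\log n$ nondeterministic bits, for computable functions $f,g$. The key structural observation — already implicit in the proof of Proposition~\ref{prop:XP} — is that every order-preserving candidate map $f\colon V(G)\to V(H)$ is completely described by $h-1$ cut points in the ordering of $V(G)$: the preimages of the $h$ vertices of $H$ are consecutive (possibly empty) intervals, so $f$ is encoded by nonnegative integers $x_1,\dots,x_h$ with $x_1+\dots+x_h=n$, equivalently by partial sums $0=b_0\le b_1\le\dots\le b_h=n$.

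First I would have the machine nondeterministically guess the $h-1$ boundaries $b_1\le\dots\le b_{h-1}$, each an element of $\{0,1,\dots,n\}$. Guessing one boundary costs $\lceil\log(n+1)\rceil$ bits, so the entire guess costs $(h-1)\lceil\log(n+1)\rceil=\Oh(h\log n)$ nondeterministic bits, matching the budget $g(h)\cdot\log n$ with $g(h)=h$. This guess determines a unique order-preserving map: the $j$-th vertex of $G$ is sent to the $i$-th vertex of $H$, where $i$ is the index with $b_{i-1}<j\le b_i$. Allowing $b_{i-1}=b_i$ (empty intervals) ensures that non-onto homomorphisms are captured as well, exactly as in the counting argument of Proposition~\ref{prop:XP}.

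Second, I would verify the guess deterministically: for each edge $uv\in E(G)$, check that $f(u)f(v)\in E(H)$. This runs in time polynomial in $n$ and $h$, so the overall running time is FPT (indeed polynomial). The machine accepts along a computation path if and only if the guessed map passes the check, and by the interval description of ordered homomorphisms such a path exists if and only if $G\to H$. This yields a bounded-nondeterminism FPT algorithm and hence places \parhomo in \wpe.

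The step requiring the most care is the bookkeeping of the nondeterminism: one must confirm that the guess is genuinely of size $\Oh(h\log n)$ and not, say, $\Oh(n\log h)$, which is what the naive encoding that assigns each of the $n$ vertices of $G$ an image independently would cost — and that larger amount would not fit the \wpe budget. Guessing the $h-1$ interval boundaries, rather than an image for each vertex, is precisely what keeps the nondeterminism within $g(h)\cdot\log n$, so this is the point on which the argument genuinely hinges.
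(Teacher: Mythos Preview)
Your proof is correct and follows essentially the same approach as the paper: guess the interval decomposition of $V(G)$ using $\Oh(h\log n)$ nondeterministic bits (the paper guesses the interval sizes $a_1,\dots,a_h$, you guess the equivalent partial sums $b_1,\dots,b_{h-1}$), then verify the edge condition in polynomial time. Your write-up is slightly more explicit about why the naive encoding fails and why non-onto maps are covered, but the underlying argument is the same.
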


\begin{proof}
    According to the definition of class \wpe, a parameterized problem, parameterized by $k$, is in \wpe if it can be solved by a nondeterministic algorithm that runs in time $f(k)\cdot n^{\Oh(1)}$ and uses at most $g(k)\cdot \log n$ nondeterministic bits.
    
    We will show that for \parhomo, where $k=|V(H)|$, such an algorithm exists.
    
    Indeed, from the proof of Proposition ~\ref{prop:XP}, a homomorphism $f:V(G)\to V(H)$ can be represented by the numbers $a_1, a_2, \ldots, a_k$, where $a_i$ is the number of vertices of $G$ mapped to the $i$-th vertex of $H$. Each $a_i\in [0,n]$ requires $\log n$ bits, so the entire mapping can be nondeterministically guessed using $k\log n$ bits. We can see that this mapping preserves the ordering.

    The preservation of edges can then be verified in polynomial time by checking that each edge in $G$ is mapped to an edge in $H$ (which takes time $\Oh(|E(G)|)$, which is polynomial in $|V(G)|=n$).
\end{proof}

Let us now prove the main parameterized complexity result for \homo$(|V(H)|)$.

\begin{theorem}\label{thm:wone}
    The \homo$(|V(H)|)$ problem is \wone-hard.
    Furthermore, it cannot be solved in time $n^{o(h)}$, unless the ETH fails.
\end{theorem}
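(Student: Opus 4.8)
The plan is to prove \wone-hardness by a parameterized reduction from \textsc{Multicolored Clique} (equivalently, \textsc{Partitioned Clique}), which is \wone-hard when parameterized by the number $k$ of color classes and, by the classical consequence of the ETH for \textsc{Clique}, admits no $f(k)\,n^{o(k)}$-time algorithm unless the ETH fails. Given an instance with classes $V_1,\dots,V_k$, each (after padding) of size $N$ with candidate vertices $g_{i,1}<\dots<g_{i,N}$, I would construct an ordered graph $H$ on $\Oh(k)$ vertices: for each class $i$ a consecutively ordered triple $L_i<M_i<R_i$ (plus a few anchor vertices), arranged globally as $L_1M_1R_1L_2M_2R_2\cdots L_kM_kR_k$. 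Between the triples of two classes $i\ne j$ I make \emph{all} nine color pairs edges of $H$ \emph{except} the single pair $\{M_i,M_j\}$; this one forbidden combination is the gadget that encodes arbitrary (non-monotone) adjacency constraints. Inside a triple I set $L_i\sim M_i\sim R_i$, $L_i\not\sim R_i$, and loops at $L_i,M_i,R_i$ (loops can be simulated by standard gadgets if one insists on simple graphs).

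For $G$ I would lay the classes out as consecutive ordered blocks $B_1<\dots<B_k$, where $B_i$ is a forward path $g_{i,0},g_{i,1},\dots,g_{i,N},g_{i,N+1}$ (the extremes being dummies). For every \emph{non-adjacent} pair $g_{i,p},g_{j,q}$ with $i<j$ I add the cross edge $g_{i,p}g_{j,q}$ to $G$. The intended reading of an ordered homomorphism $G\to H$ is that each block is colored as a prefix of $L_i$'s, a nonempty middle of $M_i$'s, and a suffix of $R_i$'s, with the middle recording the chosen vertex of $V_i$. Since the only missing cross color pair is $\{M_i,M_j\}$, a cross edge $g_{i,p}g_{j,q}$ can fail to be preserved only when both endpoints lie in their middles, i.e.\ only when $p$ and $q$ are both chosen; as such an edge is present exactly for non-adjacent pairs, a valid homomorphism exists if and only if one can pick a nonempty middle in each block so that no two chosen vertices form a non-edge, which is exactly a multicolored clique (for the forward direction take singleton middles at the clique vertices; for the converse take any representative of each middle).

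The heart of the argument, and the step I expect to be the main obstacle, is forcing the two structural properties that make this reading valid: that $B_i$ maps exactly onto the triple $L_i,M_i,R_i$ (alignment), and that its middle is \emph{nonempty} (otherwise the degenerate all-$L_i$/all-$R_i$ coloring would give a spurious homomorphism). Nonemptiness follows from $L_i\not\sim R_i$ together with anchoring the endpoints $g_{i,0}\mapsto L_i$ and $g_{i,N+1}\mapsto R_i$, since any forward path from an $L_i$-vertex to an $R_i$-vertex must cross $M_i$. Alignment and the endpoint anchoring I would enforce by a global \emph{forward-path} anchor interleaved with the blocks, in the spirit of the non-list variant in the proof of \cref{thm:structures}: because the longest forward path in the non-anchor part of $H$ is short, a sufficiently long forward anchor path in $G$ must map onto its unique counterpart in $H$, pinning every block into the correct triple. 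Verifying that the triple edges and the cross edges create no alternative long forward path that could derail this anchoring is the delicate, technical part.

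Finally I would do the bookkeeping. The reduction runs in polynomial time, with $|V(G)|=\Oh(kN)=\mathrm{poly}(n,k)$ and $|V(H)|=h=\Oh(k)$, and the parameter $h$ is bounded by a function of $k$; hence \homo$(|V(H)|)$ is \wone-hard. Moreover, since $h=\Theta(k)$, an algorithm running in time $n^{o(h)}$ would solve \textsc{Multicolored Clique} in time $\mathrm{poly}(n,k)^{o(k)}=n^{o(k)}$ (using $k\le n$), contradicting the ETH lower bound and establishing the claimed $n^{o(h)}$ lower bound.
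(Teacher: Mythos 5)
Your overall architecture (one selection gadget per color class, a single forbidden cross pair in $H$ encoding the adjacency constraint, and an anchor structure to eliminate lists) matches the paper's, which reduces from \textsc{Multicolored Independent Set} rather than \textsc{Multicolored Clique} -- an immaterial difference. However, there is a genuine gap: your construction relies on loops at $L_i,M_i,R_i$, and \homo is a problem about (simple, irreflexive) ordered graphs. This is not a cosmetic issue that ``standard gadgets'' can repair. Your block $B_i$ is a forward path of length $N+2$, and its prefix, middle, and suffix are each supposed to map to a \emph{single} vertex of $H$; every edge inside such a segment would have to map to a loop. Without loops, an order-preserving homomorphism is \emph{injective} on any forward path (this is exactly the observation used in the non-list part of Theorem~\ref{thm:structures}): if $u<v$, $uv\in E(G)$ and $f(u)=f(v)=w$, then $w$ needs a loop, and replacing $w$ by a small clique does not help because the third path vertex would have to map strictly later again. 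So a loopless $H$ of size $\Oh(k)$ cannot absorb your length-$(N+2)$ forward paths at all, and dropping the internal edges of $B_i$ destroys the very mechanism (``a forward path from an $L_i$-vertex to an $R_i$-vertex must cross $M_i$'') that forces the middle to be nonempty -- without it, the all-$L_i$/all-$R_i$ coloring is a spurious homomorphism. The paper avoids this by making the selector sets $P^i$ \emph{independent} (the break point is forced purely by order preservation and the endpoint anchors $p^i_0\mapsto a_i$, $p^i_\ell\mapsto b_i$) and by adding a second layer $Q^i$ in which $q^i_j$ is adjacent to the consecutive pair $p^i_{j-1},p^i_j$; the vertex $y_i$ is the unique common neighbor of $a_i$ and $b_i$ in $\{x_i,y_i,z_i\}$, so $q^i_{j_i}$ is forced onto $y_i$ exactly at the selected index, with no loops anywhere.

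A second, smaller gap is the anchoring, which you explicitly leave unverified. A forward-path anchor in the spirit of Theorem~\ref{thm:structures} works there because the non-anchor part of $H$ is bipartite and has no forward path on three vertices; in the present target graph the cross edges create forward paths of length $\Theta(k)$ (e.g.\ $x_1,z_1,x_2,z_2,\ldots$), so the ``unique long forward path'' argument does not transfer directly. The paper instead anchors with a $(2k+2)$-clique $c_1,\ldots,c_{2k+2}$ interleaved into the ordering: the largest clique in the un-anchored $H$ has $2k+1$ vertices, so the anchor clique of $G'$ must map onto the anchor clique of $H'$, and order preservation then pins each $r_i$ to $c_i$ and each $p^i_j$, $q^i_j$ into its intended list. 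You would need to supply an argument of comparable strength before your reduction is complete.
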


\begin{proof}
Similarly as in the proof of Theorem~\ref{thm:structures}, we will prove this theorem in two steps.
First, let us show the hardness in the \emph{list} setting: Each vertex $v$ of $G$ is equipped with a list $L(v) \subseteq V(H)$, and we additionally require that the homomorphism $f$ we are looking for satisfies $f(v) \in L(v)$ for every $v$.

We reduce from \textsc{Multicolored Independent Set}. Let $F$ be the instance graph whose vertex set is partitioned into $k$ subsets $V_1,V_2,\ldots,V_k$. We ask if $F$ has an independent set of size $k$, intersecting every set $V_i$. By copying some vertices if necessary, without loss of generality, we may assume that for each $i \in [k]$ we have $|V_i|=\ell$. 
The problem \textsc{Multicolored Independent Set} cannot be solved in time $(k\ell)^{o(k)}$, unless the ETH fails, and the problem is \wone-hard, parameterized by $k$ (see, e.g.,~\cite{dvorak2023parameterized-d6f,bonnet2020parameterized-4f9,cygan2015parameterized-4b3}).

\paragraph*{Definition of $H$.}
The graph $H$ has $5k$ vertices $\bigcup_{i\in[k]} \{a_i,b_i,x_i,y_i,z_i\}$, ordered as follows:
\[
a_1,b_1,a_2,b_2,\ldots,a_k,b_k,x_1,y_1,z_1,x_2,y_2,z_2,\ldots,x_k,y_k,z_k.
\]
For each $i \in [k]$, we add edges $a_ix_i, a_iy_i, b_iy_i, b_iz_i$. Furthermore, we add all edges in the set $\bigcup_{i\in[k]} \{x_i,y_i,z_i\}$, except for the edges $y_iy_j$. The vertices $\bigcup_{i \in [k]} \{y_i\}$ form an independent set in $H$.
This completes the definition of $H$.

\paragraph*{Definition of $G$.}
Fix $i \in [k]$ and an arbitrary total order on $V_i = \{v^i_1,\ldots,v^i_{\ell}\}$.
We introduce to $G$ a set $P^i$ with vertices $p^i_0,p^i_1,\ldots,p^i_\ell$ (with such an ordering).
We set lists $L(p^i_0)=\{a_i\}$, $L(p^i_\ell)=\{b_i\}$, and $L(p^i_j) = \{a_i,b_i\}$ for all $j \in [\ell-1]$.

Note that in any order-preserving mapping from $P^i$ to $H$ there is exactly one $j \in [\ell]$ such that $p^i_{j-1}$ is mapped to $a_i$ and $p^i_j$ is mapped to $b_i$. We will interpret choosing such a mapping as choosing $v^i_j$ to the solution (that is, the independent set in $F$).

Next, we introduce a set $Q^i$ of $\ell+2$ vertices $q^i_0,q^i_1,\ldots,q^i_\ell,q^i_{\ell+1}$ (with such an ordering).
We set lists $L(q^i_0)=\{x_i\}, L(q^i_{\ell+1})=\{z_i\}$, and $L(q^i_j)=\{x_i,y_i,z_i\}$ for $j \in [\ell]$.
For $j \in [\ell]$, the vertex $q^i_j$ is adjacent to $p^i_{j-1}$ and $p^i_j$.
We observe that if $p^i_{j-1}$ is mapped to $a_i$ and $p^i_{j}$ is mapped to $b_i$ (i.e., $v^i_j$ is selected to the independent set),
then $q^i_j$ must be mapped to $y_i$.
All vertices $q^i_{j'}$ for $j' < j$ are mapped to $x_i$ or $y_i$ (and it is possible to map them all to $x_i$).
Similarly, all vertices $q^i_{j'}$ for $j' > j$ are mapped to $y_i$ or $z_i$ (and it is possible to map them all to $z_i$).

Finally, for $j,j' \in [\ell]$ and distinct $i,i' \in [k]$, we add an edge $q^i_jq^{i'}_{j'}$ if an only if $v^i_j$ is adjacent to $v^{i'}_{j'}$.\footnote{Note that we can assume that in our instance of \textsc{Multicolored Independent Set} each set $V_i$ is independent. Then the subgraph of $G$ induced by $\bigcup_{i \in [k]} \bigcup_{j \in [\ell]} \{q^i_j\}$ is isomorphic to $F$.}

We order the vertices of $G$ as follows:
\[
P^1, P^2, \ldots, P^k, Q^1, Q^2, \ldots, Q^k,
\]
where the ordering within each set is as specified above. $G$ has $k(\ell+1) + k(\ell+2)=2k\ell + 3k$ vertices and can clearly be constructed in polynomial time. This completes the definition of $G$.

\paragraph*{Equivalence of instances.} Suppose that $F$ has a yes instance of \textsc{Multicolored Independent Set}, that is,
for each $i \in [k]$ there is $j_i$, such that $I = \bigcup_{i \in [k]} \{ v^i_{j_i}\}$ is an independent set in $F$.

We now define $f : V(G) \to V(H)$. Fix $i \in [k]$. We map $p^i_j$ to $a_i$ if $j < j_i$ and to $b_i$ otherwise.
We map $q^i_j$ to $x_i$ if $j < j_i$, to $y_i$ if $j = j_i$, and to $z_i$ if $j > j_i$.

Clearly, $f$ respects lists and the ordering of vertices.
Let us discuss that it preserves edges. As discussed above, all edges between sets $P^i$ and $Q^i$ are mapped to edges of $H$.
So let us consider an edge $q^i_jq^{i'}_{j'}$. For contradiction, suppose that its image is a non-edge of $H$, which means that $q^i_j$ is mapped to $y_i$ and $q^{i'}_{j'}$ is mapped to $y_{i'}$. This means that the vertices $v^i_j$ and $v^{i'}_{j'}$ are in $I$. However, they are adjacent in $F$ as $q^i_jq^{i'}_{j'}$ is an edge in $G$, a contradiction.

\medskip
Now consider a homomorphism $f: V(G) \to V(H)$, which respects lists and the ordering of vertices.
As observed before, for each $i \in [k]$ there is exactly one $j_i \in [\ell]$ such that $f(p^i_{j_i-1})=a_i$ and $f(p^i_{j_i})=b_i$.
Let $I = \bigcup_{i \in [k]} \{v^i_{j_i}\}$, we claim that $I$ is independent in $F$.
For contradiction, suppose that $v^i_{j_i}$ is adjacent to~$v^{i'}_{j_{i'}}$. This means that $q^i_{j_i}$ is adjacent to $q^{i'}_{j_{i'}}$ by the definition of $G$.
However, $f(q^i_{j_i})=y_i$ (since no other vertex of $\{x_i,y_i,z_i\}$ is adjacent to both $a_i$ and $b_i$) and analogously $f(q^{i'}_{j_{i'}})=y_{i'}$.
As $y_i$ is not adjacent to $y_{i'}$ in~$H$, we obtain a contradiction.

\medskip

This completes the proof of the list version of the theorem. The astute reader might notice that vertices $q^i_0$ and $q^i_{\ell+1}$ do not play any role in the above reasoning. However, they will be useful in the next part, when we show hardness in the non-list setting.

\paragraph*{Hardness in the non-list setting.}
Let $G$ and $H$ be the graphs obtained so far. We will modify them to obtain $G'$ and $H'$, respectively, such that $G'$ admits a homomorphism to $H'$ if and only if $G$ admits a homomorphism to $H$ which respects lists $L$.

Observe that the largest clique in $H$ has $2k+1$ vertices: It consists of all vertices in $\bigcup_{i \in [k]} \{x_i,z_i\}$ and one vertex $y_i$.
We introduce $2k+2$ new vertices $c_1,c_2,\ldots,c_{2k+2}$ that form a clique.
For each $i \in [k]$, we make $c_i$ adjacent to $a_i$, and $c_{i+1}$ adjacent to $b_i$.
Furthermore, for each $i \in [k]$, we make $c_{k+i}$ adjacent to $x_i$, and $c_{k+i+1}$ adjacent to $z_i$.
The ordering of vertices of $H'$ is defined as follows:
\[
c_1,a_1,b_1,\ldots,c_k,a_k,b_k,c_{k+1},x_1,y_1,z_1,c_{k+2},\ldots,c_{2k},x_k,y_k,z_k,c_{2k+1}, c_{2k+2}.
\]

We modify $G$ similarly: we add a clique on $2k+2$ vertices $r_1,r_2,\ldots,r_{2k+2}$.
For each $i \in [k]$, we make $r_i$ adjacent to $p^i_0$, and $r_{i+1}$ adjacent to $p^i_\ell$.
Furthermore, for $i \in [k]$, we make $r_{k+i}$ adjacent to $q^i_0$, and $r_{k+i+1}$ adjacent to $q^i_{l+1}$, where the ordering within the sets is as before.

We need to show that any homomorphism $f : V(G') \to V(H')$, restricted to $V(G)$, respects lists~$L$.
Note that the only $(2k+2)$-clique in $H'$ consists of vertices $\bigcup_{i \in [2k+2]} \{c_i\}$.
Consequently, the $(2k+2)$-clique $\bigcup_{i \in [2k+2]} \{r_i\}$ of $G'$ must be mapped by $f$ to this clique.
Furthermore, the only way to achieve this while preserving the ordering of the vertices is to map every $r_i$ to $c_i$ (for $i \in [2k+2]$).

Consider a vertex $p^i_0$ for $i \in [k]$.
Since $f$ is order-preserving, we have $f(p^i_0) \in \{c_i,a_i,b_i,c_{i+1}\}$.
However, $p^i_0$ is adjacent to $r_i$, and thus $f(p^i_0) \in \{a_i, c_{i+1}\}$.
If $f(p^i_0)=c_{i+1}$, then we also have $f(p^i_j)=c_{i+1}$ for all $j \in [\ell]$.
This is a contradiction, as $p^i_\ell r_{i+1} \in E(G')$ and $f(r_{i+1})=c_{i+1}$. Consequently, $f(p^i_0)=a_i$.
Analogously, we can argue that $f(p^i_\ell)=b_i$.
Thus, again using that $f$ is order-preserving, we obtain that $f(p^i_j) \in \{a_i,b_i\}$ for all $j \in [\ell-1]$.

In exactly the same way, we argue that for all $i \in [k]$ we have $f(q^i_0)=x_i$ and $f(q^i_{\ell+1}) = z_i$,
and consequently for all $j \in [\ell]$ we have $f(q^i_j) \in \{x_i,y_i,z_i\}$.

As $V(H') = 5k + 2k+2 = 7k+2$ and $V(G') = 2k\ell + 3k + 2k+2 = 2k \ell + 5k +2$, the claimed hardness follows. This completes the proof.
\end{proof}

\section{Polynomial-time cases of \homo}

\label{sec:polynomialcomp}

In Section ~\ref{Sec:complexity} we have determined that the complexity of the general \homo problem is \NP-complete. In this section, we shall focus on parameters and classes of ordered graphs, for which \homo and its variations are solvable in polynomial time.


\subsection{$k$-shifted Cliques}

We start with a definition of a class of ordered graphs $\mathcal{H}$, for which the \homo problem, where $H\in\mathcal{H}$, is polynomial.

We call an ordered graph $H$ a \emph{$k$-shifted clique} if:
\begin{itemize}
    \item $V(H)$ can be partitioned into segments $V_1,\ldots,V_k; V_i < V_{i+1}, i\in[k-1]$, each of which induces a clique in $H$,
    \item For $ i> 1$ and each $u,u' \in V_i$ such that $u < u'$, it holds that $N(u') \cap \bigcup_{j <i} V_j \subseteq N(u) \cap \bigcup_{j <i} V_j$.    
\end{itemize}

In fact, we will derive the result for a more general problem, which we call \homos.

\begin{prb}
\label{Prb:homos}
\end{prb}
\problemStatement{\homos}
  {Ordered graph $H$ that is a $k$-shifted clique and an ordered graph $G$ equipped with functions $\mathsf{low}, \mathsf{up} : V(G) \to [h]; h=|V(H)|$.}
  {Does there exist a homomorphism $f : G \to H$ such that for every $v \in V(G)$ we have $\mathsf{low}(v) \leq f(v) \leq \mathsf{up}(v)$.}

We aim to show the following result:
\begin{theorem}\label{thm:thickpaths}
    If $H$ is a $k$-shifted clique, then the \homos problem can be solved in time $n^{\Oh(k)} \cdot h^{\Oh(1)}$, where $n = |V(G)|$ and $h = |V(H)|$.
\end{theorem}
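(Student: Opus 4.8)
The plan is to exploit that an order-preserving homomorphism $f\colon G\to H$ must send the preimage of each clique-segment $V_i$ to a contiguous \emph{block} $B_i$ of $V(G)$, so that $B_1<B_2<\cdots<B_k$ partition $V(G)$ into consecutive intervals. I would organise the computation as a dynamic program over the segments $V_1,\dots,V_k$, where fixing the blocks amounts to choosing the $k-1$ boundary positions in $G$ (equivalently the block sizes $n_1,\dots,n_k$ with $\sum_i n_i=n$); there are only $\binom{n+k-1}{k-1}=n^{\Oh(k)}$ such choices, in exact parallel with \cref{prop:XP}. Once the blocks are fixed it remains to choose, inside each $B_i$, an order-preserving map $B_i\to V_i$ obeying $\mathsf{low},\mathsf{up}$ and preserving every edge. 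Since $V_i$ induces a clique, an edge inside $B_i$ maps to an edge precisely when its endpoints receive distinct images, which under order preservation means strictly increasing images; hence intra-block edges reduce to per-vertex \emph{lower} bounds that propagate left to right.

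The key structural tool is that the second defining property of a $k$-shifted clique forces $H$, restricted to any pair $V_j,V_i$ with $j<i$, to be a \emph{chain} (difference) graph: each vertex $w\in V_j$ is adjacent exactly to a prefix of $V_i$ in the given order. Consequently every cross edge $v_av_b$ with $v_a\in B_j$ and $v_b\in B_i$ becomes the clean requirement that $f(v_b)$ lie in the prefix $N(f(v_a))\cap V_i$, that is, an \emph{upper} bound on the position of $f(v_b)$ within $V_i$ determined by $f(v_a)$. Combining these prefix upper bounds with the intra-block lower bounds confines the admissible image of each vertex to an interval of its segment, so that, for a \emph{fixed} assignment of all earlier blocks, the next block can be tested by a single left-to-right sweep maintaining for each vertex the tightest lower bound (from order and intra-block edges) and the tightest upper bound (from $\mathsf{low},\mathsf{up}$ and from the already-placed earlier neighbours).

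The delicate point — and what I expect to be the main obstacle — is that the image chosen for a vertex $w$ of an earlier block in turn fixes, through its prefix, the upper bound inherited by its neighbours in later blocks, and the prefix length $|N(w)\cap V_i|$ need \emph{not} be monotone in $w$. Thus no single greedy extreme is simultaneously best for the intra-block order and for these forward cross edges, so the blocks cannot be solved independently. I would handle this within the same dynamic program by carrying, as the interface between the processed and unprocessed parts, exactly the frontier information needed to reconstruct the forward prefix-bounds; the heart of the argument is to show, using the nested (chain-graph) structure, that this interface is summarised by only $\Oh(1)$ boundary positions per segment. Granting that, a state is described by $\Oh(1)$ positions per processed segment, giving $n^{\Oh(k)}$ reachable states, while each transition extends the assignment by one block and performs the interval-feasibility sweep above in $h^{\Oh(1)}$ time; multiplying the number of states by the per-transition work yields the claimed $n^{\Oh(k)}\cdot h^{\Oh(1)}$ bound.
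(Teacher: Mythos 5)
You correctly identify the block structure, the $n^{\Oh(k)}$ enumeration of block boundaries, the reduction of intra-block edges to left-to-right propagated lower bounds (this matches the paper's Lemma~\ref{lem:clique-with-lists}), and the prefix (``chain graph'') structure of cross edges implied by the nested-neighborhood condition. You even correctly isolate the main obstacle: since the prefix length $|N(w)\cap V_i|$ is not monotone in $w$, no greedy extreme chosen left to right is simultaneously compatible with all forward cross edges. But your proposed resolution --- that the interface between processed and unprocessed blocks ``is summarised by only $\Oh(1)$ boundary positions per segment'' --- is precisely the missing proof, and you explicitly defer it (``Granting that\ldots''). As stated it is not justified: after assigning block $B_j$ to $V_j$, the tightest upper bound inherited by a future vertex $v_b$ in a later segment $V_i$ is $\min\{\,t_i(f(v_a)) : v_a\in N(v_b)\cap B_j\,\}$, where $t_i(w)$ denotes the last vertex of $V_i$ adjacent to $w$; since $t_i$ is not monotone in $w$, this minimum depends on the full order-preserving assignment of $B_j$, which in general requires $|V_j|$ boundary positions to describe, so the naive state space is $n^{\Oh(h)}$ rather than $n^{\Oh(k)}$. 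Nothing in your sketch shows how the chain-graph structure collapses this to $\Oh(1)$ positions, and this is where all the difficulty of the theorem lives.

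The paper escapes the obstacle by reversing the direction and exploiting minimality instead of compressing an interface. It peels off the \emph{last} segment $V_k$: guess only the first vertex of $G$ mapped into $V_k$ ($n$ choices per level of the recursion), solve the resulting clique instance on that suffix and take the \emph{minimum} solution $f$ (Lemma~\ref{lem:clique-with-lists}), and then prove a replacement lemma (Lemma~\ref{lem:replacement}): substituting $f$ for the restriction of any hypothetical solution to the suffix is always safe, because lowering an image inside $V_k$ can only enlarge its backward neighborhood by the nestedness condition, so every cross edge from an earlier vertex remains preserved. With the suffix thus canonically fixed, the cross-edge constraints collapse into a recomputed $\mathsf{low}'$ on the remaining instance, and recursing on the $(k-1)$-shifted clique $H-V_k$ yields $n^{\Oh(k)}\cdot h^{\Oh(1)}$. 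To complete your left-to-right dynamic program you would have to prove your interface-compression claim; the minimality-plus-replacement argument is the paper's substitute for it, and without one or the other the proof does not go through.
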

Clearly, \homos is again a list homomorphism problem and setting $\mathsf{low} \equiv 1$ and $\mathsf{up} \equiv h$ gives us a polynomial-time algorithm for \homo.

We say that a solution $f$ to \homos (that is, an ordered homomorphism respecting functions $\mathsf{low}$ and $\mathsf{up}$) is \emph{minimum} if, for any solution $f'$ and any $v \in V(G)$, it holds that $f(v) \leq f'(v)$.

The base case is for $k=1$, that is, if $H$ is a clique. We will now present two lemmas.

\begin{lemma}\label{lem:clique-with-lists}
    One can solve an instance of \homos where $G$ has $n$ vertices and $H = K_h$ in time $n^{\Oh(1)}$.
    Furthermore, if there is a solution, one can even find a minimum solution $f$.
\end{lemma}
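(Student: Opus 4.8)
The plan is to exploit the fact that when $H = K_h$ (with the natural ordering $1 < 2 < \cdots < h$) the edge-preservation condition degenerates into a non-equality condition. First I would label the vertices of $G$ as $v_1 < v_2 < \cdots < v_n$ according to the ordering, so that any order-preserving map $f$ is exactly a non-decreasing sequence $f(v_1) \le f(v_2) \le \cdots \le f(v_n)$ with $f(v_i) \in [h]$. Because $K_h$ is complete and loopless, for an edge $v_a v_b \in E(G)$ the requirement $f(v_a)f(v_b) \in E(K_h)$ holds if and only if $f(v_a) \neq f(v_b)$; and since $f$ is non-decreasing, for $a < b$ this is equivalent to $f(v_a) < f(v_b)$. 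Thus a solution to \homos is precisely a non-decreasing map respecting the bounds $\mathsf{low}, \mathsf{up}$ and strictly increasing across every edge, read left-to-right.

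This suggests a left-to-right greedy that at each vertex picks the smallest feasible value. Concretely, I would compute
\[
f(v_i) = \max\Big( \mathsf{low}(v_i),\ f(v_{i-1}),\ \max_{a<i:\, v_av_i \in E(G)} \big(f(v_a)+1\big) \Big),
\]
with the convention that absent terms contribute $-\infty$, and then verify $f(v_i) \le \mathsf{up}(v_i)$. The first term enforces the lower bound, the second enforces monotonicity, and the third enforces strict increase over all already-processed neighbors. If $f(v_i) > \mathsf{up}(v_i)$ ever occurs, I report that no solution exists.

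For correctness I would prove, by a single induction on $i$, that $f(v_i) \le f'(v_i)$ for every solution $f'$; the inductive step uses that $f'$ is non-decreasing (so $f'(v_i) \ge f'(v_{i-1}) \ge f(v_{i-1})$) and strictly increasing over edges (so $f'(v_i) \ge f'(v_a)+1 \ge f(v_a)+1$ for each earlier neighbor $v_a$), together with $f'(v_i) \ge \mathsf{low}(v_i)$. This pointwise-minimality holds purely algebraically, independent of the $\mathsf{up}$ check, and it immediately yields both directions. If the greedy run respects all $\mathsf{up}$ bounds, then $f$ satisfies monotonicity, the lists, and strict increase over edges by construction, so it is a valid and, being pointwise minimal, a \emph{minimum} solution; conversely, if greedy reports $f(v_i) > \mathsf{up}(v_i)$, any hypothetical solution $f'$ would satisfy $f'(v_i) \ge f(v_i) > \mathsf{up}(v_i)$, contradicting $f'(v_i) \le \mathsf{up}(v_i)$, so no solution exists.

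Finally, each value $f(v_i)$ is obtained by scanning the earlier neighbors of $v_i$, so the whole sweep costs $\Oh(n + |E(G)|) = n^{\Oh(1)}$. The step I expect to require the most care is the correctness argument, and in particular isolating the clean claim that the greedy assignment is a pointwise lower bound for every solution, and checking that this single claim simultaneously certifies feasibility (greedy yields a minimum solution) and infeasibility (a violated $\mathsf{up}$ bound rules out all solutions). The translation of the edge condition into strict monotone increase is the other place where one must be careful not to overlook the loopless and complete structure of $K_h$.
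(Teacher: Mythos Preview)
Your proposal is correct and follows essentially the same greedy idea as the paper: both compute, for each vertex in left-to-right order, the minimum value consistent with the lower bound, monotonicity, and strict increase across earlier edges, then check against $\mathsf{up}$. The paper phrases it as exhaustively applying two update rules to $\mathsf{low}$ until a fixed point, but since constraints only propagate forward in the ordering, this reduces to exactly your single left-to-right sweep.
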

\begin{proof}
    To prove the statement, we can exhaustively perform the following two update operations.
    For an edge $vv' \in E(G)$ such that $v < v'$, set $\mathsf{low}(v') = \max (\mathsf{low}(v'), \mathsf{low}(v)+1)$.
    For pair $v$ and $v'\in V(G)$, such that $v < v'$, set $\mathsf{low}(v') = \max (\mathsf{low}(v'), \mathsf{low}(v))$.
    We terminate the procedure when no further changes are made.
    Clearly, the procedure takes time polynomial in $n$ and $h$ (in particular, $\Oh (n^2)$, since we iterate over all pairs of vertices).

    If there exists a vertex $v \in V(G)$ such that $\mathsf{low}(v) > \mathsf{up}(v)$, we terminate the algorithm and answer that no solution exists.
    Otherwise, we set $f = \mathsf{low}$.


    Let us argue that $f$ has all the desired properties.
    First, notice that by the first update operation, $f$ is a proper coloring (since $H$ is a complete graph) and by the second update operation, it is monotone.
    Furthermore, the value of $f$ is always at least the initial value of $\mathsf{low}$, as during the procedure we only increased the values.
    Finally, $f(v) \leq \mathsf{up}(v)$, as otherwise, we terminate the algorithm. Therefore, $f$ is a solution of \homos.

    The minimality of $f$ can be easily shown by induction on the ordering of $V(G)$. This is clearly true for the first vertex $v_1$ of $V(G)$, since $\mathsf{low}(v_1)$ is not adjusted by two update operations. By the induction hypothesis, $\mathsf{low}(v_i), i=1,\ldots,k$ is minimal and, by two update operations, we have $\mathsf{low}(v_i)\le \mathsf{low}(v_k),i=1,\ldots,k-1$. By two update operations also $\mathsf{low}(v_i)\le \mathsf{low}(v_{k+1}),i=1,\ldots,k$. We have $\mathsf{low}(v_k)= \mathsf{low}(v_{k+1})$ if $v_{k+1}$ is not adjacent to any of the vertices mapped to $\mathsf{low}(v_{k})$ and if the initial $\mathsf{low}(v_{k+1})$ is lower than or equal to the final $\mathsf{low}(v_k)$. On the other hand, it is $\mathsf{low}(v_k)< \mathsf{low}(v_{k+1})$, if $v_{k+1}$ is adjacent to some of the vertices that mapped to $\mathsf{low}(v_{k})$ or if the initial $\mathsf{low}(v_{k+1})$ is higher than the final $\mathsf{low}(v_k)$. Since the final $\mathsf{low}(v_{k+1})$ is minimal with respect to its initial value and the preceding (adjacent and nonadjacent) vertices $v_i,i=1,\ldots,k$, its minimality follows.
\end{proof}

A similar result to the Lemma~\ref{lem:clique-with-lists} and an algorithm to find a minimum solution $f$ can also be found in~\cite{nescer2023duality}.

\begin{lemma}\label{lem:replacement}
    Consider an instance $(G,H,\mathsf{low},\mathsf{up})$ of \homos, where $H$ is a $k$-shifted clique for $k \geq 2$.
    Let $R= \{v_p,\ldots,v_h\}$ be the last of the segments that induce a clique (that is, $R = V_k$). Let $L = V(H) \setminus R$.

    Suppose that there exists a solution $\phi : G \to H$ and let $X = \phi^{-1}(R)$.
    Let $f$ be the minimum solution to \homos on the instance $(G[X],H[R],\mathsf{low}',\mathsf{up}')$, where the functions $\mathsf{low}',\mathsf{up}' : X \to \{p,\ldots,h\}$ are defined as follows,
    
    \begin{align*}
    \mathsf{low}'(u) &= \max (\mathsf{low}(u), p) \\
    \mathsf{up}'(u) &= \mathsf{up}(u).
    \end{align*}
    
    Then the function $\phi' : V(G) \to V(H)$ defined as
    \[
        \phi'(u) = \begin{cases}
            f(u) & \text{ if } u \in X,\\
            \phi(u) & \text{ if } u \notin X.
            \end{cases}
    \]
    is a solution to the original instance $(G,H,\mathsf{low},\mathsf{up})$ of \homos.    
\end{lemma}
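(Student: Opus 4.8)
The plan is to verify directly that the stitched-together map $\phi'$ satisfies the three defining properties of a solution to \homos: it respects the bounds $\mathsf{low}, \mathsf{up}$, it is order-preserving, and it preserves edges. Before starting, I would record two facts that drive the whole argument. First, the minimum solution $f$ on $(G[X],H[R],\mathsf{low}',\mathsf{up}')$ really exists, because the restriction $\phi|_X$ is itself a feasible solution of that sub-instance: for $u \in X$ we have $\phi(u) \in R$, so $\phi(u) \geq p$ and $\phi(u) \geq \mathsf{low}(u)$, giving $\phi(u) \geq \mathsf{low}'(u)$, while $\phi(u) \leq \mathsf{up}(u)=\mathsf{up}'(u)$; order- and edge-preservation on $G[X]$ are inherited from $\phi$, and $H[R]$ is a clique so Lemma~\ref{lem:clique-with-lists} applies. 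Consequently, by minimality, $f(u) \leq \phi(u)$ for every $u \in X$ --- this inequality is the engine of the proof. Second, since $R = V_k$ is the last segment, every vertex of $L = V(H)\setminus R$ has index $<p$ and every vertex of $R$ has index $\geq p$; thus for $u \notin X$ we have $\phi'(u)=\phi(u)\in L$ (index $<p$) and for $u \in X$ we have $\phi'(u)=f(u)\in R$ (index $\geq p$).

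The bound and order-preservation checks are then routine case analyses. For the bounds, on $X$ the map $f$ respects $\mathsf{low}' \geq \mathsf{low}$ and $\mathsf{up}'=\mathsf{up}$, and off $X$ the map equals $\phi$. For order-preservation, take $u \leq w$: the two same-side cases are immediate from the monotonicity of $f$ and of $\phi$ respectively; for the mixed cases I would use $f(u)\leq\phi(u)$ together with the $L/R$ index split, so that if $u\in X, w\notin X$ then $\phi'(u)=f(u)\leq\phi(u)\leq\phi(w)=\phi'(w)$, and if $u\notin X, w\in X$ then $\phi'(u)=\phi(u)<p\leq f(w)=\phi'(w)$.

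The crux --- and the only place where the $k$-shifted-clique structure is used --- is edge-preservation. Let $uw\in E(G)$ with $u<w$. If both endpoints lie in $X$, then $f(u)f(w)$ is an edge because $f$ is a homomorphism into the clique $H[R]$; if neither lies in $X$, then $\phi(u)\phi(w)$ is an edge because $\phi$ is. The case $u\in X, w\notin X$ cannot occur, since it would force $\phi(u)\geq p>\phi(w)$, contradicting $\phi(u)\leq\phi(w)$. The remaining case $u\notin X, w\in X$ is where I expect the real work: here $\phi'(u)=\phi(u)\in L$ and $\phi'(w)=f(w)\in R$ with $f(w)\leq\phi(w)$, both lying in $V_k$. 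Applying the shift inequality of the definition to $f(w)\leq\phi(w)$ yields $N(\phi(w))\cap L \subseteq N(f(w))\cap L$; since $\phi(u)\phi(w)\in E(H)$ places $\phi(u)$ in $N(\phi(w))\cap L$, we conclude $\phi(u)\in N(f(w))$, i.e., $\phi'(u)\phi'(w)\in E(H)$. This closes the last case and finishes the verification.
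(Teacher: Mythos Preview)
Your proof is correct and follows essentially the same approach as the paper: establish that $\phi|_X$ witnesses the existence of $f$, use the codomain split $L/R$ for monotonicity and bound compliance, and for the only nontrivial edge case ($u\notin X$, $w\in X$) combine the key inequality $f(w)\leq\phi(w)$ from minimality with the $k$-shifted-clique neighborhood inclusion. The paper is slightly terser on monotonicity (it just observes that $X$ is a final segment so the mixed case $u\in X$, $w\notin X$ never arises), whereas you additionally supply a redundant but harmless argument for that vacuous case; otherwise the two proofs are the same.
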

\begin{proof}
    Before we proceed to the proof, let us point out that $f$ exists, as $\phi|_X$ is a solution to $(G[X],H[R],\mathsf{low}',\mathsf{up}')$. This can be seen since $\mathsf{low}(u),u\in X$ cannot be lower than $p$, by the definition of $X=\phi^{-1}(R)$.

    Notice that $\phi'$ is monotone, as each of $\phi,f$ is monotone, and the codomain of $\phi$ is $\{1,\ldots,p-1\}$ while the codomain of $f$ is $\{p,\ldots,h\}$.
    Furthermore, $\phi'$ respects $\mathsf{low}$ and $\mathsf{up}$, as $\phi$ respects them,
    $\mathsf{up}' = \mathsf{up}|_X$, and $\mathsf{low}'$ is only more restrictive than $\mathsf{low}$.

    So we need to show that $\phi'$ is a homomorphism. Consider an edge $uu' \in E(G)$ such that $u < u'$. If $u,u' \in X$ or $u,u' \notin X$, then the claim follows, as both $\phi$ and $f$ are homomorphisms. Suppose that $u \notin X$ and $u' \in X$.
    Let $v_i = \phi'(u) = \phi(u)$ and $v_j = f(u')$.
    Observe that by the minimality of $f$, we have $v_j \leq \phi(u')$.
    On the other hand, we have $v_i \in N(\phi(u')) \cap L$.
    By the properties of $k$-shifted cliques, we obtain $v_i \in N(v_j) \cap L$.
    Thus, indeed, $\phi'$ is a solution to the instance $(G,H,\mathsf{low},\mathsf{up})$ of \homos.
\end{proof}

Now we are ready to prove Theorem~\ref{thm:thickpaths}.
\begin{proof}[of Theorem~\ref{thm:thickpaths}.]
    We proceed by induction on $k$.
    If $k = 1$, we are done by Lemma~\ref{lem:clique-with-lists}.
    Suppose that $k \geq 2$ and that the result is true for all $k-1$.

    Let $R = V_k = \{v_p,\ldots,v_h\}$, that is, the last segment that induces a clique in $H$ and let $L = V(H) \setminus R$.    
    Note that $H' = H-R$ is a $(k-1)$-shifted clique.
    We first check if the instance admits a solution whose codomain is $L$ (therefore solution to the instance of $G$ and $H'$); we do it by calling the algorithm inductively (we need to modify the function $\mathsf{up}$ in the obvious way by setting $\mathsf{up}(u)=\min (\mathsf{up}(u), p-1), u\in G$).
    If so, we return it as a solution we seek and stop. We see that this gives us a complexity $n^{\Oh(k)}$, by the algorithm of Lemma~\ref{lem:clique-with-lists} being performed inductively $\Oh(k)$ times.

    Otherwise, we exhaustively guess the first vertex $v$ of $G$ mapped to a vertex in $R$, which gives $n$ possibilities. Let $X$ be the subset of $V(G)$ consisting of $v$ and all its successors.

     We call the algorithm from the proof of the Lemma~\ref{lem:clique-with-lists} for $G[X]$ and $H[R]$ with the function $\mathsf{low}$ modified as in Lemma~\ref{lem:replacement}.
     If it finds no solution, we terminate the current branch (of the loop over $n$).
     Suppose that it found the minimum solution $f$.

    We call the algorithm inductively for $G[V(G)\setminus X]$ and $H[L]$ and functions $\mathsf{low}',\mathsf{up}'$ obtained as follows. Consider $u \in V(G) \setminus X$ and $U = X \cap N(u)$.
    
    We set

    \begin{align*}
    &\mathsf{up}'(u) = \min (\mathsf{up}(u), p-1) \\
    &\mathsf{low}'(u) = \max (\mathsf{low}(u), \max_{u'\in U} \min N(f(u'))).
    \end{align*}

    This $\mathsf{low}'(u)$ adjustment is necessary because it ensures that $\mathsf{low}'(u)$ is connected to $f(u'), u' \in U$ (since $\mathsf{low}(u)$ could not be connected to all $f(u'), u' \in U$), while keeping $\mathsf{low}'(u)$ minimal (since it is the minimum vertex to which $f(u')$ is connected for all $u'\in U$ (by the properties of the $k$-shifted cliques)).

If it does not find a solution, we terminate the current branch.

If both calls find solutions, we return a mapping that is the union of two mappings found.
Note again that it is a solution by the properties of $k$-shifted cliques.

Since the second call takes time $h^{\Oh(1)}$, the overall complexity is $n^{\Oh(k)}h^{\Oh(1)}$.

On the other hand, as shown in Lemma~\ref{lem:replacement}, if there exists any solution, there is a solution of the kind that we seek by Lemma~\ref{lem:replacement}.
    \end{proof}

\subsection{Small generalized pathwidth}

In this subsection, we start with the definition of the parameter $c(H)$.

Let $H$ be an ordered graph with vertex set $(v_1,\ldots,v_h)$.
By $V_{\leq i}$ (resp. $V_{>i}$) we denote the set $\{v_1,\ldots,v_i\}$ (resp. $\{v_{i+1},\ldots,v_h\}$). Let $A_i$ consist of the vertices of $V_{\leq i}$ that have neighbors in $V_{>i}$. Similarly, let $B_i$ consist of the vertices of $V_{\leq i}$ that have non-neighbors in $V_{>i}$. Then, let 

$$c(H) = \max_i \min (|A_i|,|B_i|).$$

We point out that $\max_i  |A_i|$ is roughly equal to the pathwidth of $H$. We can approximately see this using the following construction. Each ordering of vertices of an unordered graph can define a path decomposition as follows. We start with a bag $B_1$ consisting of a single vertex $v_1$. Now suppose that we defined $B_1, \ldots, B_i$ and want to define $B_{i+1}$. We obtain it from $B_i$ by adding $v_{i+1}$ and removing all vertices whose all edges are already covered.

Optimum path decomposition can be obtained from some (optimal) ordering.
Since we deal with a fixed ordering, the parameter we obtain is not the
same as the pathwidth of the underlying unordered graph, but it behaves
similarly.


\begin{theorem}
    The \homo problem can be solved in time $n^{\Oh(c(H))} \cdot h^{\Oh(1)}$, where $n = |V(G)|$ and $h = |V(H)|$.
\end{theorem}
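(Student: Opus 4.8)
The plan is to design a left-to-right dynamic program over the vertices of $H$, exploiting the fact that an ordered homomorphism $f : G \to H$ is nothing more than a choice of cut positions $0 = \pi(0) \le \pi(1) \le \cdots \le \pi(h) = n$ in the ordering $g_1 < \cdots < g_n$ of $V(G)$: the interval $I_j = f^{-1}(v_j)$ consists of the vertices in positions $(\pi(j-1),\pi(j)]$. Such a choice is a valid homomorphism exactly when each $I_j$ is independent in $G$ and every edge $g_sg_t \in E(G)$ with $s < t$ lands on an edge of $H$, i.e.\ $f(g_s) < f(g_t)$ and $v_{f(g_s)}v_{f(g_t)} \in E(H)$. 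I would process the cuts $i = 0, 1, \ldots, h$ in order, maintaining a state that records just enough about the committed intervals $I_1,\ldots,I_i$ to verify all remaining constraints.

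First I would isolate what the future actually depends on. Call a vertex $g_s$ with $s \le \pi(i)$ \emph{crossing} if it has a $G$-neighbor in a position $> \pi(i)$; feasibility forces $f(g_s) \in A_i$, since otherwise $v_{f(g_s)}$ has no neighbor in $V_{>i}$ and the crossing edge could never be mapped onto an edge. The only future constraint a crossing vertex $g_s$ can impose is that it must not be adjacent to $I_{j'}$ whenever $v_{f(g_s)}v_{j'} \notin E(H)$, and this can happen only if its color $f(g_s)$ also lies in $B_i$. The crucial observation, which is exactly where the parameter $c(H)$ enters, is that the set of colors that can matter is $A_i \cap B_i$, and $|A_i \cap B_i| \le \min(|A_i|,|B_i|) \le c(H)$; crossing vertices colored in $A_i \setminus B_i$ are adjacent in $H$ to all of $V_{>i}$ and hence impose no future restriction whatsoever.

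Building on this, I would define the state at cut $i$ to consist of $\pi(i)$ together with, for each color $j \in A_i \cap B_i$ that is actually used, the first and last positions of the $j$-colored crossing vertices. Since the coloring of the crossing vertices is monotone, any crossing vertex lying between these two positions also has color $j$, so these $\Oh(c(H))$ position-pairs (plus the choice of which $\le c(H)$ colors from the fixed set $A_i \cap B_i$ are used, a factor depending only on $c(H)$) recover the color of every crossing vertex that could forbid a future edge, while all other crossing vertices are automatically safe. This yields $n^{\Oh(c(H))} \cdot h^{\Oh(1)}$ states overall. A transition guesses the next cut $\pi(i+1) \ge \pi(i)$, thereby fixing $I_{i+1}$; it is admissible if $I_{i+1}$ is independent in $G$ and, for every $g_t \in I_{i+1}$, every earlier crossing neighbor $g_s$ of $g_t$ either lies in a recorded run, say colored $j$, with $v_jv_{i+1} \in E(H)$, or lies outside all runs and is therefore safe. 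Updating the state is routine: vertices of $I_{i+1}$ receive color $i+1$, colors leaving $A_{i+1}\cap B_{i+1}$ are discarded, and the first/last positions are recomputed from $G$ and $\pi(i+1)$; each step runs in time polynomial in $n$ and $h$. Correctness follows by induction on $i$, since every edge is checked precisely when the interval of its later endpoint is committed, and one accepts iff a state with $\pi(h)=n$ is reached.

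The main obstacle I expect is the compression argument of the third paragraph: proving rigorously that two partial solutions with the same recorded state are interchangeable for all future extensions, i.e.\ that restricting attention to $A_i \cap B_i$-colored crossing vertices loses nothing. The delicate points are verifying that a crossing vertex lying outside every recorded run is genuinely safe (one must show its color falls in $A_i \setminus B_i$), and that the relevant-color set behaves consistently along transitions — in particular that a vertex safe at cut $i$ remains safe at cut $i+1$, so that no discarded color is ever needed again. Once this interchangeability is established, the state count and the per-transition running time combine to the claimed $n^{\Oh(c(H))} \cdot h^{\Oh(1)}$ bound.
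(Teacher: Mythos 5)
Your high-level plan---a left-to-right dynamic program over the cuts $\pi(0)\le\cdots\le\pi(h)$ that remembers, as $\Oh(c(H))$ position pairs, the preimages of the few ``relevant'' colors---is the same genre as the paper's proof, and your choice of relevant set $A_i\cap B_i$ (of size at most $c(H)$) is in fact tighter than the paper's tracked set $\widetilde{A}_i\cup\widetilde{B}_i$ (of size at most $2c(H)$). However, the step you single out as delicate is not merely unproven: as stated it is false, and the algorithm built on it accepts non-homomorphisms. A crossing vertex $g_s$ can lie outside every recorded run for two very different reasons: either its color is in $A_i\setminus B_i$ (adjacent to all of $V_{>i}$, genuinely safe), or its color lies outside $A_i$ altogether (no neighbors in $V_{>i}$, so every pending forward edge of $g_s$ is doomed). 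Your transition rule ``lies outside all runs and is therefore safe'' treats both cases as safe. Concretely this goes wrong in two places: (i) when $I_{i+1}$ is committed, a vertex of $I_{i+1}$ with a $G$-neighbor beyond $\pi(i+1)$ may receive a color $v_{i+1}\notin A_{i+1}$; this is never recorded and never re-examined; (ii) when you ``discard colors leaving $A_{i+1}\cap B_{i+1}$,'' a color may leave because it left $A_{i+1}$ (its last $H$-neighbor was $v_{i+1}$) while some vertex carrying it still has an uncommitted neighbor---after the discard that vertex is silently declared safe, and the later edge check wrongly passes.

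The gap is repairable without enlarging the state space. Note that $A_i\cup B_i=V_{\leq i}$ whenever $V_{>i}\neq\emptyset$, so a color outside $A_i$ necessarily lies in $B_i$; moreover a color in $A_i\setminus B_i$ remains in $A_{i'}\setminus B_{i'}$ for all $i'\geq i$ with $i'<h$, and a color outside $A_i$ never re-enters $A_{i'}$. Hence the lifecycle of a color is monotone, and the only dangerous events are exactly the two above, both detectable from the current state: at commitment time everything is known explicitly, and a recorded run determines (by monotonicity of $f$) all crossing vertices of its color, so one can test whether any of them still has a neighbor beyond $\pi(i+1)$ before discarding. Adding these two rejection rules makes your interchangeability claim true, and the state count and transition cost then give the claimed $n^{\Oh(c(H))}\cdot h^{\Oh(1)}$ bound. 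It is worth observing that the paper sidesteps the issue entirely by tracking the union $\widetilde{A}_i\cup\widetilde{B}_i$ rather than the intersection: a color with no neighbors in $V_{>i}$ automatically belongs to $B_i$ and so remains visible to the edge check, at the price of a tracked set of size $2c(H)$ instead of $c(H)$.
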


\begin{proof}
Let $k = c(H)$ and $V(H) = (v_1, \ldots, v_h)$.
For $i$, let $i'$ be the minimum value of $j \geq i$ such that $|A_j|\leq k$ (if such $j$ exists).
Similarly, let $i''$ be the minimum value of $j \geq i$ such that $|B_j|\leq k$ (again, if such $j$ exists).
 We define $\widetilde{A}_i = A_{i'} \cap V_{\leq i}$ and $\widetilde{B}_i = B_{i''} \cap V_{\leq i}$. If $i'$ (resp. $i''$) is not defined, we set  $\widetilde{A}_i = \emptyset$ (resp.  $\widetilde{B}_i = \emptyset$). Observe that for all $i$ we have $|\widetilde{A}_i| \leq k$ and $|\widetilde{B}_i| \leq k$ and for each $i$ we either have $A_i = \widetilde{A}_i$ or $B_i = \widetilde{B}_i$.

We will construct a dynamic programming procedure that computes partial homomorphisms from $G$ to $H$. The dynamic programming table $\textsf{Tab}$ is indexed by the tuple consisting of:
\begin{itemize}
    \item a vertex $u \in V(G)$,
    \item $i \in [h]$,
    \item a set $X \subseteq V(G)$, such that for each $x \in X$ we have $x \leq u$,
    \item a function $\xi$ mapping the vertices of $G$ smaller than or equal to $u$ to $\widetilde{A}_i \cup \widetilde{B}_i$.
\end{itemize}
Note that the number of choices for $X$ and $\xi$ is at most $n^{4k}$: the size of $|\widetilde{A}_i \cup \widetilde{B}_i|$ is at most $2k$ and the preimage of each element of $\widetilde{A}_i \cup \widetilde{B}_i$ is a segment that can be described by two vertices of $G$ (its first and last element), giving us at most $(n\cdot n)^{2k}$ options.

Thus, the total number of entries in $\textsf{Tab}$ is at most $n^{4k+1} \cdot h$.

We aim to set $\textsf{Tab}[u,i,X,\xi] = \textsf{true}$ if and only if there exists a homomorphism $f : G[X] \to H$, $G[X]$ being an induced ordered subgraph of $G$ on vertices $X$, such that:
\begin{itemize}
    \item $f(u) = i$, 
    \item $f^{-1}(\widetilde{A}_i \cup \widetilde{B}_i) = X$, and 
    \item $f|_X = \xi$.
\end{itemize} 
Clearly, $G \to H$ if and only if $\textsf{Tab}$ contains a true entry such that the first coordinate of the indexing tuple is the last vertex of $G$.

Now let us describe how we fill the table  $\textsf{Tab}$.

Let $u$ be the first vertex of $G$; then the entries for the first vertex of $G$ are straightforward to fill. For each $i \in [h]$, if $v_i \notin \widetilde{A}_i \cup \widetilde{B}_i$, we set $\textsf{Tab}[u,i,\emptyset,\emptyset] = \textsf{true}$.
Otherwise, we set $\textsf{Tab}[u,i,\{u\},\xi] = \textsf{true}$, where $\xi$ maps $u$ to $v_i$.

As mentioned, if $\textsf{Tab}$ contains a true entry such that the first coordinate of the indexing tuple is the last vertex of $G$, we get an ordered homomorphism $G\to H$.

For each $i \in [h]$ and each tuple $(u',i',X',\xi')$ for which $\textsf{Tab}[u',i',X',\xi']=\textsf{true}$ and $i' \leq i$, we proceed as follows.

\begin{itemize}
    \item Consider two subcases:
    \begin{itemize}
        \item If $\widetilde{A}_{i'} = A_{i'}$, for every neighbor $u''$ of $u$ that precedes it, 
        check whether $u'' \in X'$  and $\xi'(u'') \in N(v_i)$ (such $\xi'(u'')$ must be in $A_{i'}$).
        If not, proceed to the next tuple (since the tuple would not preserve edges).
        \item If $\widetilde{B}_{i'} = B_{i'}$, for every neighbor $u''$ of $u$ that precedes it, if $u'' \in X'$, check whether $\xi'(u'') \in N(v_i)$. If not, proceed to the next tuple (since the tuple would not preserve edges).
    \end{itemize}
    \item Consider two subcases:
    \begin{itemize}
        \item If $i \notin \widetilde{A}_i \cup \widetilde{B}_i$, set $X = X' \cap \xi'^{-1}(\widetilde{A}_i \cup \widetilde{B}_i)$ and $\xi = \xi'|_X$ (because then $u$ can map to $i'<i$).
        \item If $i \in \widetilde{A}_i \cup \widetilde{B}_i$, set $X = (X' \cap \xi'^{-1}(\widetilde{A}_i \cup \widetilde{B}_i)) \cup \{u\}$ and set $\xi$ such that $\xi|_{X \setminus \{u\}} = \xi'|_{X \setminus \{u\}}$ and $\xi(u) = i$.
    \end{itemize}
    \item Set $\textsf{Tab}[u,i,X,\xi] = \textsf{true}$.
\end{itemize}

As mentioned above, if $\textsf{Tab}$ contains a true entry such that the first coordinate of the indexing tuple is the last vertex of $G$, we get our ordered homomorphism $G\to H$. This completes the proof.
\end{proof}

Identifying all the classes of ordered graphs for which the \homo problem can be solved in polynomial time remains to be solved. We have shown in ~\cite{certik_matching_2025} that the \homo problem for ordered matchings, defined as ordered graphs where each vertex has exactly one incident edge, is \NP-complete.

The same article shows that deciding whether for a given ordered graph $G$ and ordered matching $M$ there exists an ordered homomorphism $G\to M$ is fixed-parameter tractable with respect to $|V(M)|$. We observe that while the ordered graphs $G$ and $H$ in Theorem ~\ref{thm:wone} can be complex, the ordered graph candidates for $H$, for which this problem is fixed-parameter tractable, are rather restricted in article ~\cite{certik_matching_2025}. Therefore, it remains to fill the gap between these classes of ordered graphs and categorize which of the classes of ordered graphs make the \parhomo problem \wone-hard and which of them make it fixed-parameter tractable. We shall not address this
question in the article at hand.


As mentioned above, determining whether \parhomo is \textsf{W[$i$]}-complete W[$i$]-complete for any [or for every?] $i$, also remains open.

%
%
%
\bibliographystyle{splncs04}
\bibliography{mybibliography}
\end{document}